\newcommand{\be}{\begin{equation}}
\newcommand{\en}{\end{equation}}
\newcommand{\bea}{\begin{eqnarray}}
\newcommand{\ena}{\end{eqnarray}}
\newcommand{\beano}{\begin{eqnarray*}}
\newcommand{\enano}{\end{eqnarray*}}
\newcommand{\bee}{\begin{enumerate}}
\newcommand{\ene}{\end{enumerate}}
\newcommand{\mc}{\mathcal}
\newcommand{\D}{{\mc D}}
\newcommand{\Sc}{{\cal S}}
\newcommand{\E}{{\cal E}}
\newcommand{\F}{{\cal F}}
\newcommand{\G}{{\cal G}}
\newcommand{\Lc}{{\cal L}}
\newcommand{\1}{1 \!\! 1}
\newcommand{\Hil}{\mc H}
\newtheorem{thm}{Theorem}
\newtheorem{lemma}[thm]{Lemma}
\newtheorem{prop}[thm]{Proposition}
\newtheorem{defn}[thm]{Definition}
\newenvironment{proof}{\noindent {\bf Proof --}}{\hfill$\square$ \vspace{3mm}\endtrivlist}
\begin{document}

\thispagestyle{empty}

\vspace*{2cm}

\begin{center}
{\Large \bf More mathematics for pseudo-bosons}   \vspace{2cm}\\

{\large F. Bagarello}\\
  Dipartimento di Energia, Ingegneria dell'Informazione e Modelli Matematici,\\
Facolt\`a di Ingegneria, Universit\`a di Palermo,\\ I-90128  Palermo, Italy\\
e-mail: fabio.bagarello@unipa.it\\
home page: www.unipa.it/fabio.bagarello

\end{center}

\vspace*{2cm}

\begin{abstract}
\noindent We propose an alternative definition for pseudo-bosons. This simplifies the mathematical structure, minimizing the required assumptions. Some physical examples are discussed, as well as some mathematical results related to the biorthogonal sets arising out of our framework.

We also briefly extend the results to the so-called non linear pseudo-bosons.

\end{abstract}

\vspace{2cm}


\vfill


\newpage

\section{Introduction}

In a series of papers, \cite{bagpb1}-\cite{bagrev}, we have considered two operators $a$ and $b$, with $b\neq a^\dagger$, acting on a Hilbert
space $\Hil$, and satisfying, in some suitable sense, the  commutation rule $[a,b]=\1$. A nice functional structure has been deduced under suitable assumptions, and
some connections with physics, and in particular with quasi-hermitian quantum mechanics and with the technique of intertwining operators, have
been established. Following Trifonov, \cite{tri}, we have called {\em pseudo-bosons} (PB) the particle-like excitations associated to this structure. The assumptions used in our construction have been checked for a series of   (quantum mechanical) models. Among other things, we have been forced to introduce a difference between {\em regular} and {\em ordinary} PB. The first ones are those for which, see Section II, the biorthogonal sets of eigenvectors of the operators $N=ba$ and $N^\dagger$, $\F_\varphi$ and $\F_\Psi$, are Riesz bases. On the other hand, when these sets are not Riesz bases, then our PB are not regular.

This paper is motivated by the following, very natural, questions: in the definition of PB we have often required both $\F_\varphi$ and $\F_\Psi$ to be bases for $\Hil$. But, is this really necessary? It is enough, maybe, to require that just one of these two sets is a basis? Or: can we replace this requirement with that of $\F_\varphi$ and/or $\F_\Psi$  being complete? We should recall, in fact, that completeness of a set $\F$ is equivalent to $\F$ being a basis if $\F$ is an orthonormal (o.n.) set, but not in general, at least if $\Hil$ is infinite dimensional, which is the only situation we are interested here in this paper\footnote{A simple reminder: a set $\F_f=\{f_n\in\Hil,\,n\geq0\}$ is a basis if any $h\in\Hil$ admits an unique decomposition in terms of the $f_n$`s. It is complete if zero is the only vector which is orthogonal to all its vectors.}. Actually, there exist intriguingly simple examples of  non o.n. sets, which are complete in $\Hil$ but which are not bases, \cite{you,heil}: let $\E=\{e_n,\, n\geq1\}$ be an o.n. basis for $\Hil$, and let us introduce a new set $\tilde\E:=\{\tilde e_n:=e_n+e_1,\, n=2,3,4,\ldots\}$.
It is clear that $\tilde \E$ is no longer o.n., and it is easy to check that is complete but it is not a basis. Also, its biorthogonal set is easily identified: $\hat\G=\{\hat g_n:=e_n,\,n\geq2\}$, which is not even complete.

Other natural questions are the following: is it, for some reason, automatic that the two sets $\F_\varphi$ and $\F_\Psi$ are complete? Or that they are even bases in $\Hil$?

This is the kind of problems we originally wished to address here. To begin with, it is easy to deduce that the answer to the last two questions is, in general, negative. Indeed, without further assumptions, it is easy to understand that already for ordinary bosonic operators $c$ and $c^\dagger$, with $[c,c^\dagger]=\1$, the set $\chi=\{\chi_n:=\frac{1}{\sqrt{n!}}\,{c^\dagger}^n\chi_0\}$, where $c\chi_0=0$, is not even necessarily complete in $\Hil$. In fact, if $c=\frac{x+ip}{\sqrt{2}}$, with $[x,p]=i\1$, the set $\chi$ is an o.n. basis for $\Hil=\Lc^2(\Bbb R)$ but it is not, for instance, if $\Hil=\Lc^2({\Bbb R}^2)$. In this case, completeness is lost: it is easy, in fact, to find a nonzero function of $\Lc^2({\Bbb R}^2)$ which is orthogonal to all of $\chi_n$. As it is well known, completeness is recovered if we {\em double} the family of ladder operators, that is we consider two operators $c_1$ and $c_2$ satisfying $[c_j,c_k^\dagger]=\delta_{j,k}\1$. This is because $\Lc^2({\Bbb R}^2)$ is isomorphic to $\Lc^2({\Bbb R})\otimes\Lc^2({\Bbb R})$. For this reason, and to avoid these kind of problems, we will fix $\Hil=\Lc^2(\Bbb R)$ in the rest of the paper, where not stated differently, and we will concentrate on this particular situation.

This article is organized as follows: in the next section we propose a different definition for what we call $\D$-PB, that is for those PB which are, somehow, associated to a certain subspace $\D$, dense in the Hilbert space $\Hil$ on which our operators $a$ and $b$ act. This slightly different definition simplifies the treatment of PB quite a bit. In Section III we show how an interesting intertwining relation can be deduced assuming that $a$ and $b$ are related by a third operator, $\Theta$, and we also deduce that the two sets of eigenvectors of the operators $N$ and $N^\dagger$ are related by $\Theta$. In Section IV, after some useful results on biorthogonal sets, we give some physically-motivated examples, while some comments on non linear PB, \cite{bagnlpb1}-\cite{bagzno2}, and our conclusions are discussed in Section V.

\section{A new definition}

We begin this section recalling the definition of linear pseudo-bosons, as originally given in \cite{bagpb1}:

let $\Hil$ be a given Hilbert space with scalar product $\left<.,.\right>$ and related norm $\|.\|$. We introduce a pair of operators, $a$ and
$b$,  acting on $\Hil$ and satisfying the  commutation rule \be [a,b]=\1, \label{21} \en where $\1$ is the identity on $\Hil$.  Of course, this
collapses to the canonical commutation rule (CCR)  if $b=a^\dagger$. Let us call $D^\infty(X):=\cap_{p\geq0}D(X^p)$  the common domain of all the
powers of the operator $X$. In \cite{bagpb1} we have considered the following working assumptions:

\vspace{2mm}

{\bf Assumption 1.--} there exists a non-zero $\varphi_{ 0}\in\Hil$ such that $a\varphi_{ 0}=0$, and $\varphi_{ 0}\in D^\infty(b)$.

{\bf Assumption 2.--} there exists a non-zero $\Psi_{ 0}\in\Hil$ such that $b^\dagger\Psi_{ 0}=0$, and $\Psi_{ 0}\in D^\infty(a^\dagger)$.

{\bf Assumption 3.--}  $\F_\varphi=\{\varphi_{n}=\frac{1}{\sqrt{n!\,}}\,b^{n}\,\varphi_{ 0}\}$ and $\F_\Psi=\{\Psi_{ n}=\,\frac{1}{\sqrt{n!\,}}\,{a^\dagger}^{n}\,\Psi_{ 0}\}$ span the whole $\Hil$.

\vspace{2mm}

We have also considered the following extra assumption, useful but, apparently, not quite physical:

\vspace{2mm}

{\bf Assumption 4.--}  $\F_\Psi$ and $\F_\varphi$ are Riesz bases for $\Hil$.

\vspace{3mm}

For reasons which will appear clear soon, we prefer to consider here a slightly different point of view, which allows us to simplify significantly the procedure. In the present approach the relevant ingredients of our structure will be the two pseudo-bosonic operators $a$ and $b$, {\bf and} a certain dense subset $\D\subset\Hil$, which is stable under the action of $a$, $b$ and of their adjoints. More explicitly, let $a$ and $b$ be two operators on $\Hil$, $a^\dagger$ and $b^\dagger$ their adjoint, and let $\D$ be such that $a^\sharp\D\subseteq\D$ and $b^\sharp\D\subseteq\D$, where $x^\sharp$ is $x$ or $x^\dagger$. Notice that we are not requiring here that $\D$ coincides with, e.g. $D(a)$ or $D(b)$. Of course, $\D\subseteq D(a^\sharp)$ and $\D\subseteq D(b^\sharp)$.

\begin{defn}\label{def21}
The operators $(a,b)$ are $\D$-pseudo bosonic ($\D$-pb) if, for all $f\in\D$, we have
\be
a\,b\,f-b\,a\,f=f.
\label{31}\en
\end{defn}
Due to the stability of $\D$, the above equality is well defined: for instance, since $b\,f\in\D$, it follows that $a$ can safely act on it. Sometimes, to simplify the notation, instead of (\ref{31}) we will simply write $[a,b]=\1$, having in mind that both sides of this equation have to act on $f\in\D$.

It might be interesting to notice that two operators $(a,b)$ which are not $\D_1$-pb, could still be $\D_2$-pb, if $a$, $b$, $\D_1$ and $\D_2$ are chosen properly.

\vspace{2mm}

{\bf Example:--} Let $\Hil=\Lc^2(\Bbb R)$, $a=\frac{d}{dx}$, $b=x$. Let us take $\D_1=\{f(x)\in\Lc^2(\Bbb R):\, f'(x)\in\Lc^2(\Bbb R) \}$. This set is dense in $\Hil$, since it contains the set of the test functions $\Sc(\Bbb R)$, but it is not stable under the action of both $a^\sharp$ and $b^\sharp$. For instance, if $f(x)\in\D_1$, $(bf)(x)=xf(x)$ does not need to belong to $\D_1$ as well. On the other hand, if we take $\D_2=\Sc(\Bbb R)$, this set is stable under $a^\sharp$ and $b^\sharp$. Furthermore, $[a,b]f(x)=f(x)$, for all $f(x)\in \D_2$. Hence $(a,b)$ are $\D_2$-pb, while they are not $\D_1$-pb.

\vspace{2mm}

For these operators the first two assumptions above can be simplified. We now assume that

\vspace{2mm}

{\bf Assumption $\D$-pb 1.--}  there exists a non-zero $\varphi_{ 0}\in\D$ such that $a\varphi_{ 0}=0$.

\vspace{1mm}

{\bf Assumption $\D$-pb 2.--}  there exists a non-zero $\Psi_{ 0}\in\D$ such that $b^\dagger\Psi_{ 0}=0$.

\vspace{2mm}

In fact, if $(a,b)$ satisfy Definition \ref{def21}, it is obvious that $\varphi_0\in D^\infty(b)$ and that $\Psi_0\in D^\infty(a^\dagger)$, so that the vectors
\be
\varphi_n:=\frac{1}{\sqrt{n!}}\,b^n\varphi_0,\qquad \Psi_n:=\frac{1}{\sqrt{n!}}\,{a^\dagger}^n\Psi_0,
\label{32}\en
$n\geq0$, can be defined and they all belong to $\D$. We introduce, as before, $\F_\Psi=\{\Psi_{ n}, \,n\geq0\}$ and
$\F_\varphi=\{\varphi_{ n}, \,n\geq0\}$. Once again, since $\D$ is stable under the action of $a^\sharp$ and $b^\sharp$, we deduce that both $\varphi_n$ and $\Psi_n$ belong to $\D$, so that they belong to the domains of $a^\sharp$, $b^\sharp$ and $N^\sharp$. Now we prove the following

\begin{lemma}\label{lemma21}
The operators $(a,b)$ are $\D$-pb if and only if $(b^\dagger,a^\dagger)$ are $\D$-pb.
\end{lemma}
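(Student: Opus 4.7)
The plan is to observe that the stability assumption on $\D$ is symmetric under the substitution $(a,b) \leftrightarrow (b^\dagger, a^\dagger)$, since the set of adjoints $\{a,a^\dagger,b,b^\dagger\}$ is the same in both cases. So the only content of the lemma is the equivalence of the two identities
\be
[a,b]f = f \quad \forall f \in \D \qquad \Longleftrightarrow \qquad [b^\dagger, a^\dagger]g = g \quad \forall g \in \D.
\label{eq:plan1}\en

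To get the forward implication, I would pick arbitrary $f, g \in \D$ and move operators across the inner product. Using the stability of $\D$ under $a^\sharp, b^\sharp$, all four vectors $af, bf, a^\dagger g, b^\dagger g$ lie in $\D$, and similarly for products of length two, so every manipulation stays inside the domain of every operator involved. Concretely,
\be
\langle g, f\rangle = \langle g, [a,b]f\rangle = \langle g, abf\rangle - \langle g, baf\rangle = \langle b^\dagger a^\dagger g, f\rangle - \langle a^\dagger b^\dagger g, f\rangle = \langle [b^\dagger, a^\dagger]g, f\rangle,
\en
so $\langle [b^\dagger, a^\dagger]g - g, f\rangle = 0$ for all $f \in \D$. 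Density of $\D$ in $\Hil$ then forces $[b^\dagger, a^\dagger]g = g$ for every $g \in \D$, which is the claim.

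The converse is obtained either by exactly the same computation with the roles of the pairs swapped, or by formally applying the forward direction to the pair $(b^\dagger, a^\dagger)$ and noting that the resulting operators $(b^\dagger)^\dagger$ and $(a^\dagger)^\dagger$ agree with $b$ and $a$ on $\D$, since $\D \subseteq D(a) \cap D(b)$ and the double adjoint always extends the original (closable) operator.

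The only real thing to be careful about is that the computation moving $a$ and $b$ off $g$ uses the definition of adjoint, which requires $g \in D(a^\dagger)$ and $bf \in D(a^\dagger)$, etc.; this is exactly why the stability of $\D$ under both $a^\sharp$ and $b^\sharp$ is built into the definition of $\D$-pb, so no further hypothesis is needed and no technical obstacle really arises. The lemma is essentially a formal adjoint/density argument, with the stability of $\D$ doing all the heavy lifting.
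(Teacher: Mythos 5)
Your proof is correct and follows essentially the same route as the paper's: move $a$ and $b$ across the inner product using the definition of the adjoint and the stability of $\D$, then use density of $\D$ (equivalently, the paper's approximating-sequence argument with norm continuity of the scalar product) to conclude $[b^\dagger,a^\dagger]g=g$ on $\D$, with the converse handled symmetrically.
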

\begin{proof}
Suppose that $(a,b)$ are $\D$-pb. Then, because of the definition of the adjoint, we can check that $$\left<[b^\dagger,a^\dagger]f,g\right>=\left<f,[a,b]g\right>=\left<f,g\right>,$$
for all $f, g\in\D$. Let now take $\Phi\in\Hil$. Then, since $\D$ is dense in $\Hil$, there exists a sequence $\{\Phi_n\}\subset\D$ converging to $\Phi$. Therefore, recalling that the scalar product is norm continuous, we get
$$
\left<[b^\dagger,a^\dagger]f,\Phi\right>=\lim_n\left<[b^\dagger,a^\dagger]f,\Phi_n\right>=\lim_n\left<f,[a,b]\Phi_n\right>=\lim_n\left<f,\Phi_n\right>=\left<f,\Phi\right>.
$$
Therefore $[b^\dagger,a^\dagger]f=f$ for all $f\in\D$: $(b^\dagger,a^\dagger)$ are $\D$-pb.

The opposite implication can be deduced in a similar way.

\end{proof}
It is now simple to deduce the following lowering and raising relations:
\be
\left\{
    \begin{array}{ll}
b\,\varphi_n=\sqrt{n+1}\varphi_{n+1}, \qquad\qquad\quad\,\, n\geq 0,\\
a\,\varphi_0=0,\quad a\varphi_n=\sqrt{n}\,\varphi_{n-1}, \qquad\,\, n\geq 1,\\
a^\dagger\Psi_n=\sqrt{n+1}\Psi_{n+1}, \qquad\qquad\quad\, n\geq 0,\\
b^\dagger\Psi_0=0,\quad b^\dagger\Psi_n=\sqrt{n}\,\Psi_{n-1}, \qquad n\geq 1,\\
       \end{array}
        \right.
\label{33}\en
as well as the following eigenvalue equations: $N\varphi_n=n\varphi_n$ and $N^\dagger\Psi_n=n\Psi_n$, $n\geq0$, where we recall that $N=ba$ and $N^\dagger=a^\dagger b^\dagger$. In particular, we don't have to bother about the fact that the left-hand sides of these equations are well defined or not, because of what we have already deduced. As a consequence of the eigenvalue equations for $N$ and $N^\dagger$,  choosing the normalization of $\varphi_0$ and $\Psi_0$ in such a way  $\left<\varphi_0,\Psi_0\right>=1$, we deduce that
\be
\left<\varphi_n,\Psi_m\right>=\delta_{n,m},
\label{34}\en
 for all $n, m\geq0$. In fact, since $\left<N\varphi_n,\Psi_m\right>=\left<\varphi_n,N^\dagger\Psi_m\right>$, we have $(n-m)\left<\varphi_n,\Psi_m\right>=0$, which implies that $\left<\varphi_n,\Psi_m\right>=0$ if $n\neq m$. Moreover, the equality $\left<\varphi_n,\Psi_n\right>=1$ can be proved by induction on $n$, using the fact that $\left<\varphi_0,\Psi_0\right>=1$.

So far, no deep difference appears between PB and $\D$-PB. However, it is clear that the stability of $\D$ makes the treatment of these latter much simpler. The main differences arise when considering Assumption 3. The reason is that, in the original definition, we have sometimes implicitly identified completeness of the sets $\F_\varphi$ and $\F_\Psi$ in $\Hil$ with the requirement of they being bases of $\Hil$, at least at the level of the examples\footnote{In fact, in \cite{bagpb4} and \cite{abg}, for instance, we have checked that the sets $\F_\varphi$ and $\F_\Psi$ are complete. This is not enough, see Section IV.}. This is not a problem when the sets are Riesz bases, i.e. when also Assumption 4 above is verified. But, for non regular PB, this is not true in general. We introduce now the following requirement

\vspace{2mm}

{\bf Assumption $\D$-pb 3.--}  $\F_\varphi$ is a basis for $\Hil$.

\vspace{1mm}

This assumption introduces, apparently, an asymmetry between $\F_\varphi$ and $\F_\Psi$, since this last is not required to be a basis as well. Notice also that, if we replace Assumption $\D$-pb 3 with the requirement that $\F_\varphi$ is complete in $\Hil$, the example given in Section I shows that, in general, there is no a priori reason for $\F_\Psi$ to be complete, too. On the other hand, we can prove the following result:

\begin{lemma}
$\F_\varphi$ is a basis for $\Hil$ if and only if $\F_\Psi$ is a basis for $\Hil$.
\end{lemma}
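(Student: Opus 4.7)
The plan is to prove the forward implication ``$\F_\varphi$ basis $\Rightarrow$ $\F_\Psi$ basis'' and to recover the converse from Lemma \ref{lemma21}. The workhorse throughout is the biorthogonality (\ref{34}), which fixes the coefficients in any candidate basis expansion.

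Assume $\F_\varphi$ is a basis. Every $f\in\Hil$ has a unique expansion $f=\sum_n c_n(f)\,\varphi_n$; pairing against $\Psi_m$ and using (\ref{34}) forces $c_n(f)=\left<f,\Psi_n\right>$. Hence the partial-sum operators $S_Nf:=\sum_{n=0}^N\left<f,\Psi_n\right>\varphi_n$ converge strongly to the identity, and the Banach--Steinhaus theorem yields $M:=\sup_N\|S_N\|<\infty$. Two consequences come for free. First, $\F_\Psi$ is complete: any $g$ orthogonal to every $\Psi_n$ would satisfy $g=\sum_n\left<g,\Psi_n\right>\varphi_n=0$. Second, the adjoints $T_N:=S_N^\dagger$ are computed from (\ref{34}) as $T_Ng=\sum_{n=0}^N\left<g,\varphi_n\right>\Psi_n$ (up to a possible complex conjugation of scalars, immaterial here) and they inherit the uniform bound $\|T_N\|\leq M$. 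On the generating family itself, biorthogonality gives $T_N\Psi_m=\Psi_m$ as soon as $N\geq m$, so $T_Ng\to g$ on $\mathrm{span}\,\F_\Psi$; density of this span (from the completeness just obtained) combined with the uniform bound $\|T_N\|\leq M$ upgrades this to $T_Ng\to g$ in norm for every $g\in\Hil$, via a standard $\varepsilon/3$ argument. Uniqueness of the resulting expansion $g=\sum_n c_n\Psi_n$ is immediate from (\ref{34}) by pairing with $\varphi_m$, so $\F_\Psi$ is a Schauder basis of $\Hil$.

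The converse follows by symmetry. Lemma \ref{lemma21} asserts that $(b^\dagger,a^\dagger)$ are also $\D$-pb, and re-running the construction (\ref{32}) for this swapped pair produces exactly the sets $\F_\Psi$ and $\F_\varphi$ with their roles interchanged; the forward implication just proved therefore delivers ``$\F_\Psi$ basis $\Rightarrow$ $\F_\varphi$ basis'' at no extra cost. The one delicate point in the whole argument is the passage from completeness of $\F_\Psi$ (immediate) to $\F_\Psi$ being a basis: it rests crucially on the uniform bound supplied by Banach--Steinhaus, which is precisely why, as the example recalled in Section I shows, completeness of a biorthogonal family cannot on its own substitute for being a basis.
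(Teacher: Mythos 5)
Your argument is correct, and it is in substance the same fact the paper relies on --- but where the paper disposes of the lemma in one line by citing the literature (``the uniqueness of the basis biorthogonal to a given basis'', \cite{you,heil,chri}), you have reconstructed the proof of that cited result from scratch. Your route is the standard one: biorthogonality (\ref{34}) identifies the coordinate functionals of the basis $\F_\varphi$ as $f\mapsto\left<\Psi_n,f\right>$, so the partial-sum projections $S_N$ are bounded and converge strongly to the identity; Banach--Steinhaus gives a uniform bound; the adjoints $T_N$ reproduce the candidate expansions in $\F_\Psi$, fix $\mathrm{span}\,\F_\Psi$ eventually, and inherit the bound, so a density argument upgrades convergence on the span to convergence everywhere; uniqueness again comes from (\ref{34}). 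Each of these steps is sound, and you correctly sidestep the one genuinely delicate point of general Schauder-basis theory (continuity of the coordinate functionals), since here they are inner products against honest vectors $\Psi_n\in\Hil$. Your handling of the converse via Lemma \ref{lemma21} is fine, though strictly unnecessary: the forward argument uses only the biorthogonality relation, which is symmetric under exchanging $\F_\varphi$ and $\F_\Psi$, so the reverse implication is a pure relabelling. The trade-off between the two presentations is the usual one: the paper's citation is economical but opaque, while your version makes visible exactly where the hypothesis ``basis'' (as opposed to ``complete'') enters --- namely through the uniform bound on the partial sums --- which is precisely the distinction the surrounding discussion in Section II is at pains to emphasize.
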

The proof of this statement follows from the uniqueness of the basis biorthogonal to a given basis, \cite{you,heil,chri}. It might be interesting to notice that (i) this lemma reintroduce a complete symmetry between $\F_\varphi$ and $\F_\Psi$, and that (ii) a similar result is false if we simply ask the sets to be complete in $\Hil$, at least for those PB which are not regular. It might be worth also noticing that, while the completeness of $\F_\varphi$ does not imply that $\F_\varphi$ is a basis, the converse is ensured: any basis is complete.

\vspace{2mm}

{\bf Remarks:--} (1) It is interesting to check whether these results can be somehow enriched for our very specific sets $\F_\varphi$ and $\F_\Psi$, which are constructed in a particular way. In fact, this is exactly what happens. We will come back on this aspect later.

(2) If $\F_\varphi$ is a Riesz basis for $\Hil$, we could call our $\D$-PB {\em regular}, as we have done in our previous papers. However, this aspect will not be considered here.

\vspace{2mm}
In view of the examples we will discuss later on, it is also convenient to introduce a weaker form of Assumption $\D$-pb 3: for that we first introduce the notion of $\G$-quasi bases, where $\G$ is a suitable dense subspace of $\Hil$. Two biorthogonal sets $\F_\eta=\{\eta_n\in\G,\,g\geq0\}$ and $\F_\Phi=\{\Phi_n\in\G,\,g\geq0\}$ are {\em $\G$-quasi bases} if, for all $f, g\in \G$, the following holds:
\be
\left<f,g\right>=\sum_{n\geq0}\left<f,\eta_n\right>\left<\Phi_n,g\right>=\sum_{n\geq0}\left<f,\Phi_n\right>\left<\eta_n,g\right>.
\label{iiadd1}
\en
Is is clear that, while Assumption $\D$-pb 3 implies (\ref{iiadd1}), the reverse is false. However, if $\F_\eta$ and $\F_\Phi$ satisfy (\ref{iiadd1}), we still have at hand some (weak) form of resolution of the identity. In fact, formally, we could rewrite (\ref{iiadd1}) as $\sum_{n\geq0}|\eta_n\left>\right<\Phi_n,| = \sum_{n\geq0}|\Phi_n\left>\right<\eta_n,| =\1_\G$. Then our assumption is the following:

\vspace{2mm}

{\bf Assumption $\D$-pbw 3.--}  $\F_\varphi$ and $\F_\Psi$ are $\G$-quasi bases for $\Hil$.

\section{$\D$-conjugate operators}

In this section we slightly refine the structure. Notice that, in what follows, we will always assume that Assumptions $\D$-pb 1, 2 and 3 hold.

We start considering a self-adjoint, invertible, operator $\Theta$, which leaves, together with $\Theta^{-1}$, $\D$ invariant: $\Theta\D\subseteq\D$, $\Theta^{-1}\D\subseteq\D$. Then we introduce the following definition:

\begin{defn}\label{def41}
We will say that $(a,b^\dagger)$ are $\Theta-$conjugate if $af=\Theta^{-1}b^\dagger\,\Theta\,f$, for all $f\in\D$.
\end{defn}
Briefly, we will write $a=\Theta^{-1}b^\dagger\,\Theta$, meaning with that the both sides must be applied to vectors of $\D$. Of course, the fact that $\D$ is stable under the action of both $\Theta$ and $\Theta^{-1}$, makes the above definition well posed, since $\D$ is also stable under the action of $a$ and $b^\dagger$.

Then we have:

\begin{lemma}\label{lemma41} The following statements are all equivalent: 1. $(a,b^\dagger)$ are $\Theta-$conjugate; 2. $(b,a^\dagger)$ are $\Theta-$conjugate; 3. $(a^\dagger,b)$ are $\Theta^{-1}-$conjugate; 4. $(b^\dagger,a)$ are $\Theta^{-1}-$conjugate.
\end{lemma}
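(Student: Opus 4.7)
The plan is to prove the chain $1 \Leftrightarrow 4 \Leftrightarrow 3 \Leftrightarrow 2$ using two elementary ingredients: pure algebraic manipulation with $\Theta^{\pm 1}$ (for the equivalences not involving adjoints) and a density argument of the Lemma 2.1 flavor (for the equivalences obtained by passing to adjoints). Throughout I would use the stability hypotheses $\Theta\D\subseteq\D$, $\Theta^{-1}\D\subseteq\D$, together with $a^\sharp\D\subseteq\D$, $b^\sharp\D\subseteq\D$, so that every expression written below is well defined on $\D$.

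First I would show $1 \Leftrightarrow 4$. Assuming $af = \Theta^{-1}b^\dagger\Theta f$ for all $f\in\D$, apply $\Theta$ on the left: $\Theta a f = b^\dagger\Theta f$. Now for arbitrary $g\in\D$, the element $f := \Theta^{-1}g$ lies in $\D$ by the stability hypothesis, and substituting gives $b^\dagger g = \Theta a\Theta^{-1}g$, which is statement 4. The reverse implication is symmetric, multiplying by $\Theta^{-1}$ on the left and replacing $g$ by $\Theta f$. The same algebraic manipulation, applied to the pair $(b,a^\dagger)$ playing the role of $(a,b^\dagger)$, establishes $2\Leftrightarrow 3$.

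Next I would prove $1\Leftrightarrow 3$ via adjoints. Starting from statement 1, for all $f,g\in\D$ I compute
\be
\br af,g\kt = \br \Theta^{-1}b^\dagger\Theta f,g\kt = \br b^\dagger\Theta f,\Theta^{-1}g\kt = \br\Theta f, b\Theta^{-1}g\kt = \br f,\Theta b\Theta^{-1}g\kt,
\en
where the self-adjointness of $\Theta$ (and hence of $\Theta^{-1}$) is used twice, and each intermediate vector lies in $\D$ by stability. Since $\br af,g\kt = \br f,a^\dagger g\kt$, we conclude that $\br f,a^\dagger g\kt = \br f,\Theta b\Theta^{-1}g\kt$ for all $f\in\D$. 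Exactly as in the proof of Lemma 2.1, I would then extend this identity to all $\Phi\in\Hil$ by choosing $\{f_n\}\subset\D$ with $f_n\to\Phi$ and using norm continuity of the scalar product, obtaining $\br\Phi,a^\dagger g\kt = \br\Phi,\Theta b\Theta^{-1}g\kt$ for every $\Phi\in\Hil$. This forces $a^\dagger g = \Theta b\Theta^{-1}g$ on $\D$, i.e.\ statement 3. The converse $3\then 1$ is proved in the same manner, swapping the roles of $a\leftrightarrow a^\dagger$ and $b\leftrightarrow b^\dagger$.

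The main technical hurdle is the adjoint step $1\Leftrightarrow 3$: a naive ``take adjoints of both sides'' would be invalid because the operators involved are only defined on $\D$, not on their maximal domains, so the algebraic adjoint identity $(\Theta^{-1}b^\dagger\Theta)^\dagger = \Theta b\Theta^{-1}$ must be reached by the weak/density route above rather than operator-theoretically. Once that is handled, the other implications are purely formal manipulations made legitimate by the stability of $\D$ under $\Theta$ and $\Theta^{-1}$.
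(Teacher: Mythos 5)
Your proof is correct and rests on the same key mechanism as the paper's: the weak identity $\br f,a^\dagger g\kt=\br f,\Theta b\Theta^{-1}g\kt$ for $f,g\in\D$, followed by the density-plus-continuity argument of Lemma \ref{lemma21} to upgrade it to an operator identity on $\D$. The paper only writes out $1\then 2$ and declares the rest ``similar''; your extra observation that $1\Leftrightarrow 4$ and $2\Leftrightarrow 3$ need no adjoints at all, being pure substitutions $f=\Theta^{-1}g$ legitimized by the stability of $\D$, is a correct and slightly cleaner organization of the same argument.
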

\begin{proof}
We just prove here that 1. implies  2. The other statements can be proven in similar way.
Let us assume that $(a,b^\dagger)$ are $\Theta-$conjugate, and let $f, g\in\D$. Then
$$
\left<f,a^\dagger g\right>=\left<af, g\right>=\left<\left(\Theta^{-1}b^\dagger\,\Theta\right)f, g\right>=\left<f, \left(\Theta b\,\Theta^{-1}\right) g\right>,
$$
so that $\left<f,\left(a^\dagger-\left(\Theta b\,\Theta^{-1}\right)\right) g\right>=0$. Then, recalling that the scalar product is continuous and that $\D$ is dense in $\Hil$, we deduce (see the proof of Lemma \ref{lemma21}) that $\left<\hat f,\left(a^\dagger-\left(\Theta b\,\Theta^{-1}\right)\right) g\right>=0$ for all $g\in\D$ and $\hat f\in\Hil$. This implies 2.

\end{proof}

Let us suppose that $\Theta\varphi_0$ is not orthogonal to $\varphi_0$: $\left<\varphi_0,\Theta\varphi_0\right>\neq 0$. We want to show that, if $(a,b^\dagger)$ are $\Theta-$conjugate, then the two sets $\F_\varphi$ and $\F_\Psi$ introduced in the previous section are related by $\Theta$. To prove this, it is convenient to assume that $\left<\varphi_0,\Theta\varphi_0\right>=1$. This is not a major requirement since, if $(a,b^\dagger)$ are $\Theta-$conjugate, then $(a,b^\dagger)$ are also $\hat\Theta-$conjugate, where $\hat\Theta:=\frac{1}{\left<\varphi_0,\Theta\varphi_0\right>}\,\Theta$. With this choice, in fact, $\left<\varphi_0,\hat\Theta\varphi_0\right>=1$. Then we can safely assume the above normalization.
Hence we have:
\begin{prop}\label{prop6}
The operators $(a,b^\dagger)$ are $\Theta-$conjugate if and only if $\Psi_n=\Theta\varphi_n$, for all $n\geq0$.
\end{prop}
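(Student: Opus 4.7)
The plan is to prove both directions by exploiting Lemma~\ref{lemma41}, which upgrades the $\Theta$-conjugacy of $(a,b^\dagger)$ to the simultaneous identities $\Theta a=b^\dagger\Theta$ and $\Theta b=a^\dagger\Theta$ on $\D$. For the forward direction, the strategy splits into two parts: first nail down the base case $\Theta\varphi_0=\Psi_0$, and then propagate it to every $n$ via the raising structure. For the converse, I would work weakly against the biorthogonal set $\F_\Psi$.

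First I would handle the base case. From $\Theta a=b^\dagger\Theta$ applied to $\varphi_0$, the vector $\Theta\varphi_0$ lies in the kernel of $b^\dagger$. I would then compute $\left<\Theta\varphi_0,\varphi_n\right>$ for every $n\geq0$: moving $b^n$ across the inner product turns it into $\frac{1}{\sqrt{n!}}\left<(b^\dagger)^n\Theta\varphi_0,\varphi_0\right>$, which vanishes for $n\geq 1$ because $b^\dagger\Theta\varphi_0=0$, while the normalization $\left<\varphi_0,\Theta\varphi_0\right>=1$ handles $n=0$. Combined with $\left<\Psi_0,\varphi_n\right>=\delta_{0,n}$, this shows that $\Theta\varphi_0-\Psi_0$ is orthogonal to every element of $\F_\varphi$; since $\F_\varphi$ is a basis and therefore complete, the difference must vanish. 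Next I would bootstrap: iterating $\Theta b=a^\dagger\Theta$ on $\D$ gives $\Theta b^n=(a^\dagger)^n\Theta$, so
\[
\Theta\varphi_n=\frac{1}{\sqrt{n!}}\,\Theta b^n\varphi_0=\frac{1}{\sqrt{n!}}\,(a^\dagger)^n\Theta\varphi_0=\frac{1}{\sqrt{n!}}\,(a^\dagger)^n\Psi_0=\Psi_n,
\]
as required.

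For the converse, assuming $\Psi_n=\Theta\varphi_n$ for all $n$, I would fix an arbitrary $f\in\D$ and test $af$ against $\Theta^{-1}b^\dagger\Theta f$ through inner products with the $\Psi_m$'s. On one side, $\left<af,\Psi_m\right>=\left<f,a^\dagger\Psi_m\right>=\sqrt{m+1}\left<f,\Psi_{m+1}\right>$. On the other side, using that $\Theta$ and $\Theta^{-1}$ are self-adjoint and that $\Theta^{-1}\Psi_m=\varphi_m$, one transfers $\Theta^{-1}$ and $\Theta$ onto the second slot and applies $b\varphi_m=\sqrt{m+1}\,\varphi_{m+1}$ to reach $\sqrt{m+1}\left<f,\Theta\varphi_{m+1}\right>=\sqrt{m+1}\left<f,\Psi_{m+1}\right>$. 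The two sides agree for every $m$, so $(a-\Theta^{-1}b^\dagger\Theta)f$ is orthogonal to all of $\F_\Psi$; since $\F_\Psi$ is a basis by the lemma preceding Definition~\ref{def41}, hence complete, it follows that $af=\Theta^{-1}b^\dagger\Theta f$.

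The main obstacle I anticipate is pinning down the base case with the correct constant rather than only up to proportionality: uniqueness of a vacuum for $b^\dagger$ on $\D$ is not part of the hypotheses, so I would avoid any ``one-dimensionality of $\ker b^\dagger$'' shortcut and instead extract the equality $\Theta\varphi_0=\Psi_0$ from the combination of the normalization $\left<\varphi_0,\Theta\varphi_0\right>=1$ and the completeness of $\F_\varphi$. A secondary but constant concern is bookkeeping: every composition of $a$, $b$, $a^\dagger$, $b^\dagger$, $\Theta$, $\Theta^{-1}$ and their adjoint-shuffling must take place on vectors of $\D$, but the assumed stability of $\D$ under each of these operators makes all manipulations legal.
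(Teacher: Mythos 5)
Your proof is correct, and the domain bookkeeping you flag at the end is indeed all that is needed: the stability of $\D$ under $a^\sharp$, $b^\sharp$, $\Theta$ and $\Theta^{-1}$ legitimizes every adjoint transfer you perform. The forward direction, however, is organized quite differently from the paper's. The paper first proves by induction that $\left<\varphi_n,\Theta\varphi_n\right>=1$ for all $n$, then obtains the off-diagonal orthogonality $\left<\varphi_n,\Theta\varphi_k\right>=0$ from the eigenvalue equation $N^\dagger(\Theta\varphi_k)=k\,\Theta\varphi_k$, so that $\{\Theta\varphi_n\}$ is biorthogonal to $\F_\varphi$; it then invokes the basis expansion $f=\sum_k\left<\varphi_k,f\right>\Psi_k$ with $f=\Theta\varphi_n$ to conclude $\Theta\varphi_n=\Psi_n$. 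You instead pin down the vacuum first ($b^\dagger\Theta\varphi_0=0$, $\left<\Theta\varphi_0,\varphi_n\right>=\delta_{0,n}$, and completeness of $\F_\varphi$ force $\Theta\varphi_0=\Psi_0$) and then propagate algebraically through the intertwining relation $\Theta b^n=(a^\dagger)^n\Theta$ supplied by Lemma~\ref{lemma41}. This is arguably more economical: the forward direction then uses only completeness of $\F_\varphi$ rather than the full basis property, and the induction on the diagonal inner products disappears. Your converse is essentially the paper's argument in mirror image: the paper tests $(\Theta a\Theta^{-1}-b^\dagger)f$ weakly against $\F_\varphi$ and then appeals to Lemma~\ref{lemma41}, while you test $(a-\Theta^{-1}b^\dagger\Theta)f$ against $\F_\Psi$ directly; your variant additionally needs the lemma that $\F_\Psi$ is a basis (hence complete), which does hold under Assumption $\D$-pb 3, whereas the paper's version gets by with completeness of $\F_\varphi$ alone.
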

\begin{proof}
Let us first assume that $(a,b^\dagger)$ are $\Theta-$conjugate. A simple induction argument shows that $\left<\varphi_n,\Theta\varphi_n\right>=1$ for all $n\geq0$. Indeed this is true for $n=0$. Let us now assume that $\left<\varphi_n,\Theta\varphi_n\right>=1$. Then, using Definition \ref{def41}, the fact that $\varphi_{n+1}=\frac{1}{\sqrt{n+1}}\,b\,\varphi_n$, and the stability of $\D$ under $b^\sharp$ and $\Theta$,
$$
\left<\varphi_{n+1},\Theta\varphi_{n+1}\right>=\frac{1}{n+1}\,\left<\varphi_{n},b^\dagger\Theta\,b\varphi_{n}\right>=\frac{1}{n+1}\,\left<\varphi_{n},\Theta\,a\,b\varphi_{n}\right>=
\frac{1}{n+1}\,\left<\varphi_{n},\Theta\,(N+\1)\varphi_{n}\right>=1,
$$
because of our induction assumption.

The next step consists in proving that $\left<\varphi_{n},\Theta\varphi_{k}\right>=0$ whenever $n\neq k$. This is a standard consequence of the following eigenvalue equation:
$N^\dagger(\Theta\varphi_k)=k(\Theta\varphi_k)$, $\forall \,k\geq0$, which in turn follows from Definition \ref{def41} and Lemma \ref{lemma41}. Hence we conclude that the set $\F_{\tilde\varphi}=\{\tilde\varphi_n:=\Theta\varphi_n,\,n\geq0\}$ is biorthogonal to $\F_\varphi$. To conclude the proof we still have to prove that  $\F_{\tilde\varphi}$ coincides with $\F_{\Psi}$. Indeed, our Assumption $\D$-pb 3 implies that each $f\in\Hil$ can be written as $f=\sum_{k\geq0}\left<\varphi_k,f\right>\Psi_k$. Then, if we take in particular $f\equiv\tilde\varphi_n$, we find that $\tilde\varphi_n=\sum_{k\geq0}\left<\varphi_k,\tilde\varphi_n\right>\Psi_k=\sum_{k\geq0}\delta_{n,k}\Psi_k=\Psi_n$. Hence $\F_{\tilde\varphi}=\F_{\Psi}$\footnote{This is clearly consistent with the existence of an unique basis which is biorthogonal to a given basis, \cite{you}.}.

\vspace{2mm}

Let us now assume that $\Psi_n=\Theta\varphi_n$, for all $n\geq0$. Then, since $a^\dagger$ is a raising operator for $\Psi_n$, $a^\dagger\Psi_n=\sqrt{n+1}\,\Psi_{n+1}$, we deduce that $\Theta^{-1}a^\dagger \Theta\varphi_n=\sqrt{n+1}\,\varphi_{n+1}$, which should be compared with $b\,\varphi_n=\sqrt{n+1}\,\varphi_{n+1}$. Now, let $f$ be a generic vector in $\D$. Then we have
$$
\left<\left(\Theta\,a\,\Theta^{-1}-b^\dagger\right)f,\varphi_{n}\right>=\left<f,\left(\Theta^{-1}\,a^\dagger\,\Theta-b\right)\varphi_{n}\right>=0,
$$
for all $n\geq0$. Hence, since $\F_\varphi$ is complete in $\Hil$, we conclude that $\left(\Theta\,a\,\Theta^{-1}-b^\dagger\right)f=0$ for each $f\in\D$, so that $(b^\dagger,a)$ are $\Theta^{-1}$-conjugate. Our statement follows from Lemma \ref{lemma41}.

\end{proof}

Incidentally we observe that, because of this Proposition, our normalization condition on $\varphi_0$, $\left<\varphi_0,\Theta\varphi_0\right>=1$, can be equivalently stated as a normalization for $\Psi_0$, $\left<\Psi_0,\Theta^{-1}\Psi_0\right>=1$. It is also interesting to stress that, up to this point, we have not required to $\Theta$ to be positive (in some suitable sense). The essential reason is that there is no need for that. In fact,
\begin{prop}
If $(a,b^\dagger)$ are $\Theta-$conjugate then $\left<f,\Theta f\right>>0$ for all non zero $f\in D(\Theta)$.
\end{prop}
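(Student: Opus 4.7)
The plan is to exploit Proposition \ref{prop6} together with Assumption $\D$-pb 3 to expand $\Theta f$ into a sum over $\F_\Psi$ and then recognise $\left<f,\Theta f\right>$ as a sum of squared moduli.

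First I would note that, since $\Theta$ is self-adjoint, $\Theta = \Theta^*$ is in particular closed and $\varphi_n\in\D\subseteq D(\Theta)$ for every $n$, so by Proposition \ref{prop6} we have $\Psi_n = \Theta\varphi_n$. Fix $f\in D(\Theta)$ with $f\neq 0$. Because $\F_\varphi$ is a basis (Assumption $\D$-pb 3), so is $\F_\Psi$ (Lemma 2.2); hence the vector $\Theta f\in\Hil$ admits the unique expansion
\be
\Theta f \;=\; \sum_{k\geq 0} d_k\, \Psi_k,
\label{planexp}
\en
where biorthogonality (\ref{34}) forces $d_k = \left<\varphi_k,\Theta f\right>$. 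Using self-adjointness of $\Theta$ and $\varphi_k\in D(\Theta)$, together with $\Theta\varphi_k = \Psi_k$, we get $d_k = \left<\Theta\varphi_k,f\right> = \left<\Psi_k,f\right>$.

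Next I would compute $\left<f,\Theta f\right>$ by pairing $f$ with (\ref{planexp}) term by term, using continuity of the scalar product:
\be
\left<f,\Theta f\right> \;=\; \sum_{k\geq 0} \left<\Psi_k,f\right>\left<f,\Psi_k\right> \;=\; \sum_{k\geq 0} \bigl|\left<\Psi_k,f\right>\bigr|^2 \;\geq\; 0.
\label{plansum}
\en
This already yields non-negativity. For strict positivity, I would argue by contradiction: if $\left<f,\Theta f\right>=0$, then every $\left<\Psi_k,f\right>$ vanishes; since $\F_\Psi$ is a basis it is in particular complete, so $f=0$, contradicting our hypothesis.

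The only delicate point is the justification of (\ref{planexp}) and the termwise pairing in (\ref{plansum}). The expansion itself is just the fact that $\Theta f\in\Hil$ and $\F_\Psi$ is a basis, so no closedness argument on $\Theta$ applied to the expansion of $f$ is needed; this is where I expect a reader might initially worry, and it is the one step I would spell out carefully. The identification $d_k=\left<\Psi_k,f\right>$ then follows cleanly from self-adjointness of $\Theta$ because $\varphi_k\in\D\subseteq D(\Theta)=D(\Theta^*)$, and the rest is a routine Parseval-type computation.
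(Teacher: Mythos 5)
Your proof is correct and is essentially the paper's own argument in mirror image: the paper expands $f=\sum_n\left<\Psi_n,f\right>\varphi_n$ and moves $\Theta$ onto $\varphi_n$ by self-adjointness, whereas you expand $\Theta f$ in $\F_\Psi$; both reduce $\left<f,\Theta f\right>$ to $\sum_n|\left<\Psi_n,f\right>|^2$ and use completeness of $\F_\Psi$ for strict positivity. The extra care you take in justifying the termwise pairing and the final contradiction step only makes explicit what the paper leaves implicit.
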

\begin{proof}
We first observe that, in general,  the domain of $\Theta$, $D(\Theta)$, is larger than $\D$: $\D\subseteq D(\Theta)\subseteq\Hil$, where $D(\Theta)=\Hil$ only if $\Theta$ is bounded.

Now, each $f\in D(\Theta)$ can be written as $f=\sum_n\left<\Psi_n,f\right>\varphi_n$. Hence, using the continuity of the scalar product, we have
$$
\left<f,\Theta f\right>=\sum_n\left<f,\Psi_n\right>\left<\varphi_n,\Theta f\right>=\sum_n\left<f,\Psi_n\right>\left<\Theta\varphi_n, f\right>=\sum_n\left<f,\Psi_n\right>\left<\Psi_n, f\right>=\sum_n|\left<f,\Psi_n\right>|^2,
$$
which is surely strictly positive if $f\neq0$.

\end{proof}

In some previous paper, \cite{bagzno1,bagzno2}, we have discussed the relation of (non linear) PB with crypto-hermiticity, or its many variations, \cite{mosta}. We are now in the position of repeating a similar analysis in our present settings. In particular, it is a simple exercise to check that, if $(a,b^\dagger)$ are $\Theta-$conjugate, then
\be
Nf=\Theta^{-1}N^\dagger\Theta f,
\label{42}\en
which is our way to say that $N$ is a {\em strongly crypto-hermitian operator}. More in general:
\begin{defn}\label{def31}
Let $X$ be an operator defined on $\D$. We say that $X$ is strongly crypto-hermitian if $Xf=\Theta^{-1}X^\dagger\Theta f$, $\forall \,f\in\D$.
\end{defn}
Notice that, in this definition, we are  fixing two essential ingredients: $\Theta$ and $\D$. Sometimes, if we need to stress these aspects, it might be more convenient to say that $X$ is {\em $(\D,\Theta)$-strongly crypto-hermitian}.

One may wonder wether the previous statement could be inverted: suppose that $N$ is strongly crypto-hermitian. Does it follow that $(a,b^\dagger)$ are $\Theta-$conjugate? In general, the answer seems to be negative, since $a$ and $b^\dagger$ could be, for instance, related as $af=K^{-1}(\Theta^{-1}b^\dagger\Theta)f$, for some $K=K^\dagger$, invertible, mapping $\D$ in $\D$ together with its inverse, and commuting on $\D$ with $\Theta^{-1}b^\dagger\Theta$. Of course, if the only possible choice of an operator $K$ having all these properties is the identity operator, then we could conclude that also the inverse is true. However, we are not yet in a position to get this conclusion. This is work in progress.

\vspace{2mm}

Going back to formula (\ref{42}), we can rewrite it as $\Theta\,Nf=N^\dagger\Theta f$, which shows that $\Theta$ intertwines between $N$ and its adjoint on $\D$. It is easy to check that, choosing, in particular, $f=\varphi_n$, both sides of the equality produce $n\Psi_n$.

We postpone to a future paper the detailed analysis of the consequences of Definition \ref{def31}. This could be particularly interesting,  from a physical point of view, for instance when $X$ is some (generalized) non self-adjoint hamiltonian. We refer to \cite{intop} for some results on intertwining operators.

\section{Some results on biorthogonal sets and some examples}

The examples discussed later in this section will show that the sets $\F_\varphi$ and $\F_\Psi$ share a quite peculiar property: they are related to an o.n. basis via an, in general, unbounded, invertible, operator. This makes the two sets not Riesz bases, for which most of the results which are true for o.n. bases can easily be adapted. On the other hand, see \cite{you,heil,chri}, they are tricky object and some extra care is surely required.

 In the first part of this section we will generalize some of the results holding true for Riesz bases to a slightly more general situation, relevant for those physical applications we will consider later.

Let $\E=\{e_n\in\Hil, n\geq0\}$ be an o.n. basis of $\Hil$ and let us consider a self-adjoint, invertible operator $T$, such that $e_n\in D(T)\cap D(T^{-1})$ for all $n$. Here we are considering the possibility that $T$ or $T^{-1}$, or both, are unbounded. Of course $D(T)$, $D(T^{-1})$ and their intersection $\D$  are, at least, dense in $\Hil$, while they both coincide with $\Hil$ if $T, T^{-1}\in B(\Hil)$. Under our assumption, the vectors  $\varphi_n=Te_n$ and $\Psi_n=T^{-1}e_n$, $n\geq0$, are well defined in $\Hil$. We call $\F_\varphi=\{\varphi_n,\,n\geq0\}$ and $\F_\Psi=\{\Psi_n,\,n\geq0\}$. A simple consequence of these definitions is that $\varphi_n\in D(T^{-1})$, $T^{-1}\varphi_n=e_n$, and $\Psi_n\in D(T)$, $T\Psi_n=e_n$, $n\geq0$. Also, $\Psi_n\in D(T^2)$ and $\varphi_n\in D(T^{-2})$: $T^2\Psi_n=\varphi_n$ and $T^{-2}\varphi_n=\Psi_n$.

We can now prove the following
 \begin{prop}\label{propa1} Under the above assumptions:
 (i) the sets $\F_\varphi$ and $\F_\Psi$ are biorthogonal; (ii) if $f\in D(T)$ is orthogonal to all the $\varphi_n$, then $f=0$; (iii) if $f\in D(T^{-1})$ is orthogonal to all the $\Psi_n$, then $f=0$; (iv) $\forall\,f, g\in\D$ we have $$\left<f,g\right>=\sum_{n=0}^\infty\left<f,\varphi_n\right>\left<\Psi_n,g\right> = \sum_{n=0}^\infty\left<f,\Psi_n\right>\left<\varphi_n,g\right>. $$ Therefore $\F_\varphi$ and $\F_\Psi$ are $\D$-quasi bases; (v) if $T^{-1}$ is bounded, then any $f\in D(T)$ can be written as $f=\sum_{n=0}^\infty\left<\varphi_n,f\right>\Psi_n$. Moreover, if $\hat g\in\Hil$, $\left<f,\hat g\right>=\sum_{n=0}^\infty\left<f,\varphi_n\right>\left<\Psi_n,\hat g\right>$; (vi)  if $T$ is bounded, then any $f\in D(T^{-1})$ can be written as $f=\sum_{n=0}^\infty\left<\Psi_n,f\right>\varphi_n$. Moreover, if $\hat g\in\Hil$, $\left<f,\hat g\right>=\sum_{n=0}^\infty\left<f,\Psi_n\right>\left<\varphi_n,\hat g\right>$.
  \end{prop}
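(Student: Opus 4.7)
The plan is to reduce every statement to a computation on the orthonormal basis $\E$, using the defining identities $\varphi_n=Te_n$, $\Psi_n=T^{-1}e_n$ and repeatedly moving $T$ (or $T^{-1}$) across the inner product via self-adjointness. The only genuine care required is tracking domains: at each step one must check that the vector being acted on lies in $D(T)$ or $D(T^{-1})$, which here is always automatic because $TT^{-1}=\id$ on $D(T^{-1})$ and $T^{-1}T=\id$ on $D(T)$, and because $\E\subset \D$.

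For (i) I would just compute $\left<\varphi_n,\Psi_m\right>=\left<Te_n,T^{-1}e_m\right>=\left<e_n,T(T^{-1}e_m)\right>=\left<e_n,e_m\right>=\delta_{n,m}$, noting $T^{-1}e_m\in D(T)$ by invertibility. For (ii), if $f\in D(T)$ satisfies $\left<f,\varphi_n\right>=0$ for all $n$, then $\left<Tf,e_n\right>=0$ for all $n$; completeness of $\E$ gives $Tf=0$, and invertibility gives $f=0$. Part (iii) is the mirror statement using $T^{-1}$.

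The heart of the proposition is (iv). For $f,g\in\D$, both $Tf$ and $T^{-1}g$ lie in $\Hil$, so Parseval applied to the o.n.\ basis $\E$ gives
\[
\left<Tf,T^{-1}g\right>=\sum_{n\geq 0}\left<Tf,e_n\right>\left<e_n,T^{-1}g\right>.
\]
Self-adjointness rewrites the individual factors as $\left<Tf,e_n\right>=\left<f,\varphi_n\right>$ and $\left<e_n,T^{-1}g\right>=\left<\Psi_n,g\right>$, while the left-hand side equals $\left<f,TT^{-1}g\right>=\left<f,g\right>$ (since $T^{-1}g\in D(T)$). This yields the first identity in (iv); the second follows by swapping the roles of $T$ and $T^{-1}$, which are on equal footing because $T^{-1}$ is also self-adjoint.

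For (v) I would expand $Tf\in\Hil$ as $Tf=\sum_n\left<e_n,Tf\right>e_n$ in norm, and then apply $T^{-1}$, which by assumption is \emph{bounded} and therefore continuous, obtaining $f=T^{-1}Tf=\sum_n\left<e_n,Tf\right>T^{-1}e_n=\sum_n\left<\varphi_n,f\right>\Psi_n$ after one more use of self-adjointness. The scalar product formula against $\hat g\in\Hil$ follows by continuity of $\left<\cdot,\hat g\right>$. Part (vi) is the symmetric statement, obtained by swapping $T\leftrightarrow T^{-1}$, $\varphi_n\leftrightarrow\Psi_n$, and using that now $T$ (rather than $T^{-1}$) is bounded. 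The main, and only, obstacle is bookkeeping of domains: in (v) and (vi) one really needs the boundedness hypothesis to pull the operator inside the infinite sum, since without it the partial sums $\sum_{n\le N}\left<e_n,Tf\right>e_n$ need not converge in a sense compatible with the unbounded operator.
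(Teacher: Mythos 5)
Your proposal is correct and follows essentially the same route as the paper: Parseval's identity applied to $\left<Tf,T^{-1}g\right>$ for (iv), and expansion of $Tf$ in $\E$ followed by the bounded operator $T^{-1}$ (equivalently, the norm estimate with $\|T^{-1}\|$) for (v)--(vi). The only cosmetic differences are that you spell out (i)--(iii), which the paper dismisses as trivial, and you obtain the scalar-product formula in (v) from norm-continuity of $\left<\cdot,\hat g\right>$ rather than by repeating the Parseval computation; both are valid.
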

 \begin{proof}
 The proofs of $(i)$, $(ii)$ and $(iii)$ are trivial and will not be given here. To prove $(iv)$ we first observe that if $f, g\in \D$, then both $Tf$ and $T^{-1}g$ are well defined vectors in $\Hil$. Hence, recalling that $\E$ is an o.n. basis and using the definitions of $\varphi_n$ and $\Psi_n$, we get
 $$
 \left<f,g\right>=\left<Tf,T^{-1}g\right>=\sum_{n=0}^\infty\left<Tf,e_n\right>\left<e_n,T^{-1}g\right>=
 \sum_{n=0}^\infty\left<f,\varphi_n\right>\left<\Psi_n,g\right>.
 $$
 Analogously,
 $$
 \left<f,g\right>=\left<T^{-1}f,Tg\right>=\sum_{n=0}^\infty\left<T^{-1}f,e_n\right>\left<e_n,Tg\right>=
 \sum_{n=0}^\infty\left<f,\Psi_n\right>\left<\varphi_n,g\right>.
 $$
 $(v)$ If $f\in D(T)$ we can write $Tf=\sum_{n=0}^\infty\left<e_n,Tf\right>e_n=\sum_{n=0}^\infty\left<\varphi_n,f\right>e_n$. Now
 $$
 \left\|f-\sum_{n=0}^N\left<\varphi_n,f\right>\Psi_n\right\|=\left\|T^{-1}\left(Tf-\sum_{n=0}^N\left<\varphi_n,f\right>e_n\right)\right\|\leq\|T^{-1}\|
 \left\|Tf-\sum_{n=0}^N\left<\varphi_n,f\right>e_n\right\|,
 $$
 which goes to zero when $N$ diverges. The other statement can be proved similarly to $(iv)$.

 $(vi)$ The proof is similar to $(v)$.

\end{proof}

The outcome of this proposition is that we don't really need $\F_\varphi$ and $\F_\Psi$ to be Riesz bases in order to allow a {\em natural} decomposition of most vectors of $\Hil$. This is possible also if one between $T$ and $T^{-1}$ is unbounded, at least if the assumptions under which Proposition \ref{propa1} is stated are satisfied, in some dense subspace of $\Hil$. Of course, when both $T$ and $T^{-1}$ are bounded, then $\F_\varphi$ and $\F_\Psi$ are Riesz bases. However, in the most general case, $\F_\varphi$ and $\F_\Psi$ turn out to be $\D$-quasi bases.






\subsection{Examples}

In some older papers of ours we have considered several examples of PB. We will reconsider few of them, the {\em more physical-motivated ones}, adopting our new point of view.

\subsubsection{The extended quantum harmonic oscillator}

The first example we want to consider was first introduced, in a pseudo-bosonic context, in \cite{bagpb4}. The hamiltonian of this model, introduced in \cite{dapro}, is the  non self-adjoint operator $H_\beta=\frac{\beta}{2}\left(p^2+x^2\right)+i\sqrt{2}\,p$, where $\beta$ is a  strictly positive parameter and $[x,p]=i\1$.

Introducing the standard bosonic operators $a=\frac{1}{\sqrt{2}}\left(x+\frac{d}{dx}\right)$, $a^\dagger=\frac{1}{\sqrt{2}}\left(x-\frac{d}{dx}\right)$, $[a,a^\dagger]=\1$, and the related operators $
A_\beta=a-\frac{1}{\beta}$, and $B_\beta=a^\dagger+\frac{1}{\beta}$,
we have
$
H_\beta=\beta\left(B_\beta A_\beta+\gamma_\beta\,\1\right),
$
where $\gamma_\beta=\frac{2+\beta^2}{2\beta^2}$. It is clear that, for all $\beta>0$, $A_\beta^\dagger\neq B_\beta$ and  $[A_\beta, B_\beta]=\1$. Hence we have to do, apparently, with pseudo-bosonic operators. In \cite{bagpb4} we have deduced, among other results, that the vectors
$$
\varphi_n^{(\beta)}(x)=\frac{1}{\sqrt{n!}}\,B_\beta^n\,\varphi_0^{(\beta)}(x)=\frac{1}{\pi^{1/4}\,\sqrt{2^n\,n!}}\,\left(x-\frac{d}{dx}+\frac{\sqrt{2}}{\beta}\right)^n\,e^{-\frac{1}{2}(x-\sqrt{2}/\beta)^2},
$$
and
$$
\Psi_n^{(\beta)}(x)=\frac{1}{\sqrt{n!}}\,{A_\beta^\dagger}^n\,\Psi_0^{(\beta)}(x)=\frac{1}{\pi^{1/4}\,\sqrt{2^n\,n!}}\,\left(x-\frac{d}{dx}-\frac{\sqrt{2}}{\beta}\right)^n\,
e^{-\frac{1}{2}(x+\sqrt{2}/\beta)^2}.
$$
are eigenstates respectively of $H_\beta$ and $H_\beta^\dagger$ with the same eigenvalue, $\beta(n+\gamma_\beta)$. In particular, the two vacua $\varphi_0^{(\beta)}(x)$ and $\Psi_0^{(\beta)}(x)$ of $A_\beta$ and $B_\beta^\dagger$ are $\varphi_0^{(\beta)}(x)=\frac{1}{\pi^{1/4}}\,e^{-\frac{1}{2}(x-\sqrt{2}/\beta)^2}$ and $\Psi_0^{(\beta)}(x)=\frac{1}{\pi^{1/4}}\,e^{-\frac{1}{2}(x+\sqrt{2}/\beta)^2}$. Also, we have shown that the operator $V_\beta=e^{(a+a^\dagger)/\beta}=e^{\sqrt{2}x/\beta}$, together with its inverse, map the o.n. basis $\E:=\{e_n(x)=\frac{1}{\sqrt{n!}}(a^\dagger)^ne_0(x), \,n\geq0\}$, where $a\,e_0(x)=0$, into  $\F_\varphi^{(\beta)}=\{\varphi_n^{(\beta)}(x),\,n\geq 0\}$ and $\F_\Psi^{(\beta)}=\{\Psi_n^{(\beta)}(x),\,n\geq 0\}$, respectively. More exactly, $\varphi_n^{(\beta)}=e^{-1/\beta^2}V_\beta\varphi_n$, and $\Psi_n^{(\beta)}=e^{1/\beta^2}V_\beta^{-1}\varphi_n$, $n\geq0$. This suggests to identify the operator $T$ of Propositions \ref{propa1} with $e^{-1/\beta^2}V_\beta$. It is clear that
$$
D(T)=\{f(x)\in\Lc^2(\Bbb R):\, e^{\sqrt{2}x/\beta}f(x)\in \Lc^2(\Bbb R)\},$$  and $$D(T^{-1})=\{f(x)\in\Lc^2(\Bbb R):\, e^{-\sqrt{2}x/\beta}f(x)\in \Lc^2(\Bbb R)\}.
$$
These sets are dense in $\Lc^2(\Bbb R)$, since both contain the set $\Sc(\Bbb R)$ of fast decreasing functions. Then, since $e_n(x)\in\Sc(\Bbb R)$,  Proposition \ref{propa1} holds and we conclude that $\F_\varphi^{(\beta)}$ and $\F_\Psi^{(\beta)}$ are biorthogonal $D(T)\cap D(T^{-1})$-quasi bases, as  required by Assumption $\D$-pbw 3. We also recall that, \cite{bagpb4}, they are both complete in $\Lc^2(\Bbb R)$.

Concerning Assumptions $\D$-pb 1 and 2, a set $\D$ with the required properties does exist: we take $\D\equiv \Sc(\Bbb R)$. It is clear that both $\varphi_0^{(\beta)}(x)$ and $\Psi_0^{(\beta)}(x)$ belong to $\Sc(\Bbb R)$, and that $A_\beta^\sharp$ and $B_\beta^\sharp$ leave this space stable. Needless to say, $\Sc(\Bbb R)$ is also dense in $\Lc^2(\Bbb R)$. Therefore, most requirements discussed in Section II are satisfied.

\subsubsection{The Swanson model}

The starting point is the non self-adjoint hamiltonian,
$
H_\theta=\frac{1}{2}\left(p^2+x^2\right)-\frac{i}{2}\,\tan(2\theta)\left(p^2-x^2\right),
$
where $\theta$ is a real parameter taking value in $\left(-\frac{\pi}{4},\frac{\pi}{4}\right)\setminus\{0\}=:I$, \cite{dapro}.  As before, $[x,p]=i\1$. Introducing the annihilation and creation operators $a$, $a^\dagger$, and their linear combinations $A_\theta=\cos(\theta)\,a+i\sin(\theta)\,a^\dagger=\frac{1}{\sqrt{2}}\left(e^{i\theta}x+e^{-i\theta}\,\frac{d}{dx}\right)$ and $B_\theta=\cos(\theta)\,a^\dagger+i\sin(\theta)\,a=\frac{1}{\sqrt{2}}\left(e^{i\theta}x-e^{-i\theta}\,\frac{d}{dx}\right)$, we can write
$
H_\theta=\omega_\theta\left(B_\theta\,A_\theta+\frac{1}{2}\1\right),
$
where $\omega_\theta=\frac{1}{\cos(2\theta)}$ is well defined because $\cos(2\theta)\neq0$ for all $\theta\in I$. It is clear that $A_\theta^\dagger\neq B_\theta$ and that $[A_\theta,B_\theta]=\1$. The two vacua of $A_\theta$ and $B_\theta^\dagger$ are $
\varphi_0^{(\theta)}(x)=N_1 \exp\left\{-\frac{1}{2}\,e^{2i\theta}\,x^2\right\},
$
 and $
\Psi_0^{(\theta)}(x)=N_2 \exp\left\{-\frac{1}{2}\,e^{-2i\theta}\,x^2\right\},
$
where $N_1$ and $N_2$ are suitable normalization constants. Notice that, since $\Re(e^{\pm 2i\theta})=\cos(2\theta)>0$ for all $\theta\in I$, both $\varphi_0^{(\theta)}(x)$ and $\Psi_0^{(\theta)}(x)$ belong to $\Lc^2({\Bbb R})$. The functions of the sets $\F_\varphi^{(\theta)}$ and $\F_\Psi^{(\theta)}$ are found in \cite{bagpb4}:
$$
\left\{
\begin{array}{ll}
\varphi_n^{(\theta)}(x)=\frac{N_1}{\sqrt{2^n\,n!}}
\,H_n\left(e^{i\theta}x\right)\,\exp\left\{-\frac{1}{2}\,e^{2i\theta}\,x^2\right\},  \\
\Psi_n^{(\theta)}(x)=\frac{N_2}{\sqrt{2^n\,n!}}
\,H_n\left(e^{-i\theta}x\right)\,\exp\left\{-\frac{1}{2}\,e^{-2i\theta}\,x^2\right\},
\end{array}
\right.
$$
where $H_n(x)$ is the n-th Hermite polynomial. Furthermore, in \cite{bagpb4} we have also deduced that a non zero complex constant $\alpha$ does exist such that $\varphi_n^{(\theta)}(x)=\alpha\, T_\theta\, e_n(x)$,  and  $\Psi_n^{(\theta)}(x)=\frac{1}{\overline\alpha}\, T_\theta^{-1}\, e_n(x)$, for all $n\geq0$, where the $e_n(x)$'s are the same as in the previous example, and $T_\theta=e^{i\frac{\theta}{2}(a^2-{a^\dagger}^2)}=e^{i\frac{\theta}{2}\left(x\frac{d}{dx}+\frac{d}{dx}x\right)}$ is a self-adjoint, invertible, unbounded operator. From now on, to simplify the notation, we will assume $\alpha=1$. Since $(T_\theta f)(x)=e^{i\frac{\theta}{2}}f(e^{i\theta}x)$ (for all functions for which this formula makes sense), it is clear that
$$
D(T_\theta)=\{f(x)\in\Lc^2(\Bbb R):\, f(e^{i\theta}x)\in \Lc^2(\Bbb R), \,\forall\theta\in I\},$$ $$D(T_\theta^{-1})=\{f(x)\in\Lc^2(\Bbb R):\, f(e^{-i\theta}x)\in \Lc^2(\Bbb R), \,\forall\theta\in I\}.
$$
They are both dense, together with their intersection $D(T_\theta)\cap D(T_\theta^{-1})$, in $\Lc^2(\Bbb R)$, since all these sets contain the linear span of the $e_n(x)$'s, $\Lc_\E$: each finite linear combination of the $e_n(x)=\frac{1}{\sqrt{2^n\,n!}}H_n(x)e^{-\frac{1}{2}\,x^2}$ clearly belongs to both $D(T_\theta)$ and $D(T_\theta^{-1})$. Notice that this is true because $\theta\in I$. Otherwise the statement would be false. Obviously, $\Lc_\E$ is dense in $\Lc^2(\Bbb R)$, since $\E$ is an o.n. basis for $\Lc^2(\Bbb R)$. Now, our Proposition \ref{propa1} can be applied and the conclusion is that $\F_\varphi^{(\theta)}$ and $\F_\Psi^{(\theta)}$ are $D(T_\theta)\cap D(T_\theta^{-1})$-quasi bases. We also recall that they are both complete in $\Lc^2(\Bbb R)$, \cite{bagpb4}.

The space $\D$ is, as in the previous example, $\Sc(\Bbb R)$. This is stable under the action of $A_\theta^\sharp$ and $B_\theta^\sharp$ , and $\Psi_0^{(\theta)}(x), \varphi_0^{(\theta)}(x)\in \Sc(\Bbb R)$.

\subsubsection{Generalized Landau levels}

The details of the model are discussed in \cite{abg}, and it is surely not worth, and too long, repeating them here. However, we need to stress that this example is a two-dimensional version of what we have discussed so far in this paper.

 The essential idea is that we have a non self-adjoint hamiltonian acting on $\Hil=\Lc^2({\Bbb R}^2)$, which with a suitable choice of variables, can be written as $h'=B'A'-\frac{1}{2}\1$, where $A'=\alpha'\left(\partial_x-i\partial_y+\frac{x}{2}(1-2k_2)-\frac{iy}{2}(1-2k_1)\right)$ and $B'=\gamma'\left(-\partial_x-i\partial_y+\frac{x}{2}(1-2k_2)+\frac{iy}{2}(1+2k_1)\right)$, for suitable complex constants $\alpha'$ and $\gamma'$, and for $k_j\in\left]-\frac{1}{2},\frac{1}{2}\right[$. This hamiltonian commutes with a second, again non self-adjoint, operator $h=B\,A-\frac{1}{2}\1$, a second hamiltonian, where $A=\alpha\left(-i\partial_x+\partial_y-\frac{ix}{2}(1+2k_2)+\frac{y}{2}(1-2k_1)\right)$ and $B=\gamma\left(-i\partial_x-\partial_y+\frac{ix}{2}(1-2k_2)+\frac{y}{2}(1+2k_1)\right)$, $\alpha, \gamma\in \Bbb{C}$ chosen properly, \cite{abg}. We have discussed in \cite{abg} in which sense this model extends the ordinary two-dimensional hamiltonian of the Landau levels to a non self-adjoint situation. In particular, we go back to Landau levels simply taking $k_1=k_2=0$ and $\alpha= \alpha'= \gamma= \gamma'=\frac{1}{\sqrt{2}}$.

The vacua of $A$, $A'$ and of $B^\dagger$, $B'^\dagger$ are found to be   $\varphi_{0,0}(x,y)=N_\varphi\,e^{\left\{-\frac{x^2}{4}(1+2k_2)-\frac{y^2}{4}(1-2k_1)\right\}}$ and
$\Psi_{0,0}(x,y)=N_\Psi\,e^{\left\{-\frac{x^2}{4}(1-2k_2)-\frac{y^2}{4}(1+2k_1)\right\}},$
 where $N_\varphi$ and $N_\Psi$ are normalization
constants  chosen in such a way that
$\left<\varphi_{0,0},\Psi_{0,0}\right>=1$.

In \cite{abg} it is shown that the vectors $$\varphi_{n,l}(x,y)=\frac{B'^n\,B^l}{\sqrt{n!\,l!}}\,\varphi_{0,0}(x,y), \mbox{ and }
\Psi_{n,l}(x,y)=\frac{(A'^\dagger)^n\,(A^\dagger)^l}{\sqrt{n!\,l!}}\,\Psi_{0,0}(x,y),$$
 $n,l=0,1,2,3,\ldots$, are related to the vectors $e_{n,l}(x,y)$ of an o.n. basis of $\Lc^2({\Bbb R}^2)$, $\E=\{e_{n,l}(x,y),\,n,l\geq0\}$, in a simple way. Here, see \cite{abg}, $e_{n,l}(x,y)=\frac{1}{\sqrt{2^{n+l}\,n!\,l!}}\, H_n(x)H_l(y)e^{-\frac{1}{4}(x^2+y^2)}$ produces the o.n. basis of a two-dimensional harmonic oscillator. In particular, we have shown that $\varphi_{n,l}(x,y)=T e_{n,l}(x,y)$, while $\Psi_{n,l}(x,y)=T^{-1} e_{n,l}(x,y)$, $n, l\geq0$, with $T=\sqrt{2\pi} N_\varphi e^{-\frac{x^2}{2}\,k_2+\frac{y^2}{2}k_1}$, a simple multiplication operator. We now have
 $$
 D(T)=\left\{f(x,y)\in\Lc^2({\Bbb R}^2): e^{-\frac{x^2}{2}\,k_2+\frac{y^2}{2}k_1}f(x,y)\in\Lc^2({\Bbb R}^2)\right\},
 $$
 and a similar definition can be deduced for $D(T^{-1})$. We stress again that $k_j\in\left]-\frac{1}{2},\frac{1}{2}\right[$ here. These two sets are dense in $\Lc^2({\Bbb R}^2)$, together with their intersection, since they both contain $\Lc_\E$, the linear span of the $ e_{n,l}(x,y)$'s. Then, Proposition \ref{propa1} implies that $\F_\varphi=\{\varphi_{n,l}(x,y)\}$ and $\F_\Psi=\{\Psi_{n,l}(x,y)\}$ are $D(T)\cap D(T^{-1})$-quasi bases for $\Hil$. They are also complete in  $\Lc^2({\Bbb R}^2)$, \cite{abg}.

Concerning Assumptions $\D$-pb 1 and 2, a set $\D$ with the required properties does exist: for that we take $\D\equiv \Sc({\Bbb R}^2)$. It is clear that both $\varphi_{0,0}(x,y)$ and $\Psi_{0,0}(x,y)$ belong to $\Sc({\Bbb R}^2)$, and that $A^\sharp$, $A'^\sharp$,  $B^\sharp$ and $B'^\sharp$, all leave this space stable. Needless to say, $\Sc({\Bbb R}^2)$ is also dense in $\Lc^2({\Bbb R}^2)$. Therefore, all the requirements discussed in Section II are satisfied.

\vspace{3mm}

{\bf Remarks:--} (1) It turns out from our general results and from our examples here that the set $D(T)\cap D(T^{-1})$ is simply a {\em supplementary space}, useful to investigate the nature of $\F_\varphi$ and $\F_\Psi$ but not strictly related, in principle, to the $\D$-pb nature of the operators $a$ and $b$. On the other hand, the role of $\D$ is crucial to keep the mathematics of the procedure simple and {\em under control}.

(2) More details on the above examples, as well as other examples, can be found in \cite{bagpb4,abg}.

\section{$\D$-non linear PB and conclusions}

The new definition of PB proposed here can be easily extended to what we have called {\em non linear} PB, \cite{bagnlpb1}-\cite{bagzno2}. Let us consider a strictly increasing sequence $\{\epsilon_n\}$: $0=\epsilon_0<\epsilon_1<\cdots<\epsilon_n<\cdots$. Then, given two operators $a$ and $b$ on $\Hil$, and a set $\D\subset\Hil$ which is dense in $\Hil$,

\begin{defn}
We will say that the triple $(a,b,\{\epsilon_n\})$ is a family of $\D$-non linear pseudo-bosons ($\D$-NLPB) if the following properties hold:
\begin{itemize}

\item {\bf p1.} a non zero vector $\Phi_0$ exists in $\D$ such that $a\,\Phi_0=0$;

\item {\bf  p2.} a non zero vector $\eta_0$ exists in $\D$ such that $b^\dagger\,\eta_0=0$;

\item {\bf { p3}.} Calling
\be
\Phi_n:=\frac{1}{\sqrt{\epsilon_n!}}\,b^n\,\Phi_0,\qquad \eta_n:=\frac{1}{\sqrt{\epsilon_n!}}\,{a^\dagger}^n\,\eta_0,
\label{55}
\en
we have, for all $n\geq0$,
\be
a\,\Phi_n=\sqrt{\epsilon_n}\,\Phi_{n-1},\qquad b^\dagger\eta_n=\sqrt{\epsilon_n}\,\eta_{n-1}.
\label{56}\en
\item {\bf { p4}.} The set $\F_\Phi=\{\Phi_n,\,n\geq0\}$ is a basis for $\Hil$.

\end{itemize}

\end{defn}

{\bf Remarks:--} (1) Since $\D$ is stable under the action of $b$ and $a^\dagger$, it follows that $\Phi_n, \eta_n\in \D$, for all $n\geq0$.

(2) $\D$-PB are recovered choosing $\epsilon_n=n$.

(3) If $\F_\Phi$ is a Riesz basis for $\Hil$, the $\D$-NLPB are called {\em regular}, in agreement with our previous notation.

(4) The set $\F_\eta=\{\eta_n,\,n\geq0\}$ is automatically a basis for $\Hil$ as well. This follows from the fact that, calling $M=ba$, we have $M\Phi_n=\epsilon_n\Phi_n$ and $M^\dagger\eta_n=\epsilon_n\eta_n$. Therefore, choosing the normalization of $\eta_0$ and $\Phi_0$ in such a way $\left<\eta_0,\Phi_0\right>=1$, $\F_\eta$ is biorthogonal to the basis $\F_\Phi$. Then, it is possible to check that $\F_\eta$ is the unique basis which is biorthogonal to $\F_\Phi$.

(5) It could be useful to introduce, in the present context, the notion of $\G$-quasi bases. However, this will not be done here.

\vspace{2mm}

Also in this context it is possible to deduce interesting intertwining relations. We just consider here the simple situation, motivated by Proposition \ref{prop6}, in which the two bases are related by a suitable self-adjoint, invertible and, in general, unbounded operator $\Theta$ which, together with $\Theta^{-1}$, leaves $\D$ invariant. More explicitly, we require that $\eta_n=\Theta\Phi_n$, $\forall\,n$. In this case we easily get $$\left(M^\dagger\Theta-\Theta M\right)\Phi_n=0,$$
for all $n$. Therefore, $M$ is strongly crypto-hermitian. As already stressed in \cite{bagnlpb1}-\cite{bagzno2}, this could be relevant in discussing physical systems described by some hamiltonian which is not self-adjoint, but crypto-hermitian or PT-symmetric, and with eigenvalues $\epsilon_n$ which are not necessarily linear in the quantum number $n$.

\vspace{2mm}

We have proposed a slightly {\em improved} definition of PB, which we have called $\D$-PB, for which the same original results deduced for ordinary PB can be deduced in a simpler way. Adopting this definition, some of the original assumptions can also be weakened, making, in our opinion, all the construction rather elegant. We have also discussed some explicit quantum mechanical example and we have extended our construction to the non linear case.

\section*{Acknowledgements}

The author acknowledges financial support by the MIUR. He also thanks very much Prof. Christopher Heil, Prof. Camillo Trapani and Dr. Petr Siegl for their precious advices during the preparation of this paper.


\begin{thebibliography}{99}



\bibitem{bagpb1} F. Bagarello, {\em Pseudo-bosons, Riesz bases and coherent states}, J. Math. Phys., {\bf 50}, DOI:10.1063/1.3300804, 023531 (2010) (10pg)

\bibitem{bagpb2} F. Bagarello {\em Construction of pseudo-bosons systems},  J. Math. Phys., {\bf 51}, doi:10.1063/1.3300804, 023531 (2010) (10pg)

\bibitem{bagpb3} F. Bagarello {\em Mathematical aspects of intertwining operators: the role of Riesz bases},  J. Phys. A, doi:10.1088/1751-8113/43/17/175203, {\bf 43},  175203 (2010) (12pp)

\bibitem{bagcal}  F. Bagarello, F. Calabrese {\em Pseudo-bosons arising from Riesz bases}, Bollettino del Dipartimento di Metodi e Modelli Matematici, {\bf 2}, 15-26, (2010)

\bibitem{bagpbJPA} F. Bagarello, {\em (Regular) pseudo-bosons versus bosons},  J. Phys. A, {\bf 44}, 015205 (2011)

\bibitem{bagpb4} F. Bagarello, {\em Examples of Pseudo-bosons in quantum mechanics},  Phys. Lett. A, {\bf 374}, 3823-3827 (2010)


\bibitem{abg} S.T. Ali, F. Bagarello, J.-P. Gazeau, {\em Modified Landau levels, damped harmonic oscillator and two-dimensional pseudo-bosons},  J. Math. Phys., {\bf 51}, 123502 (2010)

\bibitem{bagrev} F. Bagarello, {\em Pseudo-bosons, so far}, Rep. Math. Phys., {\bf 68}, No. 2, 175-210 (2011)

\bibitem{tri} D.A. Trifonov, {Pseudo-boson coherent and Fock states}, arXiv: quant-ph/0902.3744, Proceedings of the 9th International Workshop on Complex Structures, Integrability and Vector Fields, Sofia, August 2008, 241-250



\bibitem{bagnlpb1} F. Bagarello, {\em Non linear pseudo-bosons}, J. Math. Phys.,   {\bf 52}, 063521, (2011)

\bibitem{bagzno1} F. Bagarello, M. Znojil, {\em Non linear pseudo-bosons versus hidden Hermiticity},  J. Phys. A, {\bf 44}, 415305, (2011)

\bibitem{bagzno2} F. Bagarello, M. Znojil, {\em Non linear pseudo-bosons versus hidden Hermiticity. II: The case of unbounded operators},
J. Phys. A, {\bf 45}, 115311,  (2012)


\bibitem{mosta} A. Mostafazadeh, {\em Pseudo-Hermitian representation of Quantum Mechanics}, Int. J. Geom. Methods Mod. Phys. {\bf 7}, 1191-1306 (2010)






\bibitem{you} R. M. Young, {\em On complete biorthogonal bases}, Proceedings of the American Mathematical Society, {\bf 83}, No. 3, 537-540, (1981)

\bibitem{heil} C. Heil, {\em A basis theory primer: expanded edition}, Springer, New York, (2010)

\bibitem{chri} Christensen O., {\em An Introduction to Frames and Riesz Bases}, Birkh\"auser, Boston, (2003)

\bibitem{intop} Kuru S., Tegmen A., Vercin A., {\em Intertwined isospectral potentials in an arbitrary dimension},
J. Math. Phys, {\bf 42}, No. 8,
3344-3360, (2001); Kuru S., Demircioglu B., Onder M., Vercin A., {\em Two families of superintegrable and isospectral potentials in two dimensions},
J. Math. Phys, {\bf 43}, No. 5,
2133-2150, (2002); Samani K. A., Zarei M., {\em Intertwined hamiltonians in two-dimensional curved spaces}, Ann. of Phys., {\bf 316}, 466-482, (2005).


\bibitem{dapro} J. da Provid$\hat e$ncia, N. Bebiano, J.P. da Provid$\hat e$ncia, {\em Non hermitian operators with real spectrum in quantum mechanics}, ELA, {\bf 21}, 98-109 (2010)




\end{thebibliography}
\end{document}